\mathchardef\mhyphen="2D
\newcommand{\LongPath}{{\sf $h$-Path}}
\newcommand{\SetCover}{{\sf Minimum Set Cover}}
\newcommand{\SeqComp}{{\sf  String to Graph 
Compatibility Matching}}
\newcommand{\SeqAlign}{{\sf String to Graph Approximate Matching}}
\newcommand{\SeqAlG}{{\sf String to Graph 
Restricted  Approximate Matching}}
\begin{document}

\title{Complexity Issues of String to Graph
Approximate Matching\thanks{Riccardo Dondi dedicates
the paper to the memory of his beloved father, Gilberto, who passed away on November 26, 2019.}}
\author{Riccardo Dondi\inst{1}
\and Giancarlo Mauri \inst{2}
\and Italo Zoppis \inst{2}}
\institute{
Universit\`a degli Studi di Bergamo, Bergamo, Italy\\
\and
Universit\`a degli Studi di Milano-Bicocca, Milano - Italy\\
\email{riccardo.dondi@unibg.it,mauri@disco.unimib.it, zoppis@disco.unimib.it}
}

\maketitle

\begin{abstract}
The problem of matching a query string to a
directed graph, whose vertices are labeled
by strings, has application in different
fields, from data mining to computational biology.
Several variants of the problem have been
considered, depending on the fact that
the match is exact or approximate and,
in this latter case, which edit operations
are considered and 
where are allowed.
In this paper we present results on the 
complexity of the approximate matching
problem, where edit operations 
are symbol substitutions and are
allowed only on
the graph labels or
both on the graph labels and the query string.
We introduce a variant of the problem
that asks
whether there exists a path
in a graph that represents a query string 
with any number of edit operations
and we show that is is NP-complete,
even when labels have length one and in the
case the alphabet is binary.
Moreover, when it is parameterized by 
the length of the input string and
graph labels have length one, 
we show that the problem is  fixed-parameter 
tractable and it is unlikely to admit a 
polynomial kernel. 
The NP-completeness of this problem leads to the
inapproximability 
(within any factor) 
of the approximate matching 
when edit operations are allowed only on the
graph labels.
Moreover, we show that the variants of
 approximate string 
matching to graph we consider are 
not fixed-parameter tractable, when the parameter
is the number of edit operations, even
for graphs that have distance one from a DAG.
The reduction for this latter result 
allows us to prove the inapproximability 
of the variant
where edit operations can be applied both on the query string
and on graph labels. 
\end{abstract}
\keywords{Algorithms on strings, 
Computational complexity,
Graph query, 
Parameterized complexity,
Patterns, String to graph matching.}

\section{Introduction}
\label{sec:intro}

Given a query string $s$
and a directed graph $G$ whose
vertices are labeled with strings (referred
as labeled graph), the matching and the approximate matching of $s$ to $G$
ask for a path (not necessarily simple)
in $G$ that represents $s$, that is
by concatenating the labels of the vertices
on the path we obtain $s$ or an approximate occurrence of $s$.

The matching and the approximate matching of a query
string to a labeled graph have applications
in different areas, from graph databases and
data mining to genome
research. The problems have been introduced
in the context of  pattern matching
in hypertext \cite{DBLP:conf/cpm/Akutsu93,DBLP:journals/jal/AmirLL00,DBLP:conf/cpm/ParkK95,Manber1992}, but
have found recently new applications.
Indeed in computational biology 
a representation of variants of related sequences
is often provided by a labeled graph \cite{Pevzner2001,DBLP:conf/eccb/Myers05}
and the query of a string
in a labeled graph has found application in
computational pan-genomics \cite{DBLP:journals/jcb/NguyenHZREAKHP15,pangenomics}.

The exact matching problem is known 
to be in P \cite{DBLP:conf/cpm/Akutsu93,DBLP:journals/jal/AmirLL00,DBLP:conf/cpm/ParkK95}.
Furthermore, conditional lower bounds for this problem has been recently given in \cite{DBLP:conf/icalp/EquiGMT19}.

The approximate string to graph matching problem, referred
to \SeqAlign{},
has the goal of minimizing the number of edit operations 
(of the query string
or of the labels of the graph) such that there 
exists a path $p$ in $G$
whose labels match the query string.
\SeqAlG{} denotes the variant where
edit operations are allowed only on the graph labels.
\SeqAlign{} and \SeqAlG{} are
known to be NP-hard \cite{DBLP:journals/tcs/Navarro00}, 
even for binary alphabet \cite{DBLP:conf/recomb/JainZGA19}.
When the edit operations are allowed only on the query 
string, then 
\SeqAlign{} is polynomial-time 
solvable \cite{DBLP:conf/recomb/JainZGA19}. 
Moreover, when the input graph is a Directed Acyclic
Graph (DAG), \SeqAlign{} and \SeqAlG{} are polynomial-time 
solvable \cite{Manber1992}.

In this contribution, we consider  the \SeqAlign{} problem
and the \SeqAlG{} problem,
with the goal of deepening the understanding
of their complexity.
Notice that the edit operations we consider are
symbol substitutions 
of the graph labels or of the query string.
Other variants with different
edit operations have been considered in literature \cite{DBLP:journals/jal/AmirLL00,DBLP:conf/recomb/JainZGA19}.

We introduce a variant of \SeqAlG{}, called
\SeqComp{}, that asks whether it is possible to 
find an occurrence of a query string
in a graph with any number of edit operations of the graph labels.
This decision problem  is helpful to characterize
whether a feasible solution of
\SeqAlG{}
exists or not.
We show in Section \ref{sec:hardnessComp} that \SeqComp{} 
is NP-complete,
even when the labels of the graph have length one or
when the alphabet is binary. 
The reduction shows also that 
\SeqComp{} when parameterized by the length
of the query string 
is unlikely to have a polynomial kernel\footnote{
A problem parameterized by parameter $t$ admits a polynomial kernel 
if there exists a polynomial-time
algorithm that reduces the instance of the problem so that it has 
a size which is a polynomial in $t$.} (for details
on kernelization we refer to \cite{Niedermeier:2006,DBLP:series/txcs/DowneyF13}).
A consequence of the intractability of \SeqComp{} 
is that 
\SeqAlG{} cannot be approximated within any factor in polynomial time.
Notice that if we allow edit operations of the query string,
then the existence of a path that represents
an approximate matching of the query string can be decided 
in polynomial time.
Indeed, it is enough to check whether 
the input graph
contains a (non necessarily simple) path $p$ in $G$ that represents
a string of length $|s|$. 

We consider in Section \ref{sec:hardAlign} the 
parameterized complexity of 
\SeqAlG{} and of \SeqAlign{} and
we show that they are W[2]-hard 
when parameterized by the number of edit operations,
even for a labeled graph having distance one from a 
DAG.
This result shows that, while \SeqAlG{} and \SeqAlign{} 
are solvable
in polynomial time when the labeled graph is a 
DAG \cite{Manber1992}, even for graphs that are very close to DAG
they become hard. 
The reduction designed to prove this latter result 
allows us to show 
that \SeqAlign{} is not approximable within factor 
$\Omega(\log (|V|))$ and $\Omega(\log (|s|))$, for a labeled graph $G=(V,E)$ and a query string $s$.

In Section \ref{sec:paracompl1}, we provide
a fixed-parameter tractable algorithm for \SeqComp{},
when parameterized by size of the query string and when the graph labels have
length one.
We conclude the paper in Section \ref{sec:conclusion} with some
open problems, while in Section \ref{sec:def} we
introduce some definitions
and the problems we are interested in.
Some of the proofs are not included due to page limit.

\section{Definitions}
\label{sec:def}
Given an alphabet $\Sigma$ and a string 
$s$ over $\Sigma$, we denote by $|s|$ the length
of $s$, by $s[i]$, with $1 \leq i \leq |s|$,
the $i$-th symbol of $s$ and
by $s[i,j]$, with $1 \leq i \leq j \leq |s|$,
the substring of $s$ that starts at position $i$
and ends at position $j$. 

Every graph we consider in this paper
is \emph{directed}.
Given a graph $G=(V,E)$ and a vertex $v \in V$, 
we define
$N^+(v)= \{ u \in V: (v,u) \in E \}$
and $N^-(v)= \{ w \in V: (w,v) \in E \}$.

A labeled graph $G = (V,E,\sigma)$ is a graph whose
vertices are labeled with strings, formally
assigned by a labeled function
$\sigma: V \rightarrow \Sigma^*$, 
where $\Sigma$ 
is an alphabet of symbols.
Notice that $\sigma(v)$, with $v \in V$, denotes
the string 
associated by $\sigma$ to vertex $v$.
Let $p = v_1 v_2 \dots v_z$ be a path (non necessarily
simple) 
in $G$, the set of vertices that induces $p$
is denoted by $V(p)$ and the string 
associated with $p$ is defined as
$\sigma(p) = \sigma(v_1) \sigma(v_2) \dots \sigma(v_z)$,
that is $\sigma(p)$ is obtained by concatenating
the strings that label the vertices of path $p$.

Consider a string $s$ on alphabet $\Sigma$ and a
labeled graph $G=(V,E,\sigma)$.
We say that a path $p$ in $G$
is an occurrence of $s$ if $\sigma(p) = s$; 
in this case we call $\sigma(p)$ 
an exact matching of $s$ and we say that $p$ matches $s$.

An edit operation of a string $s$ is a substitution 
of the symbol in a position $i$, 
with $1 \leq i \leq |s|$, of $s$ with a different symbol
in $\Sigma$.
An edit operation of $G=(V,E, \sigma)$ is an edit operation of a string
$\sigma(v)$, with $v \in V$.
A path $p$ in $G$ is an \emph{approximate matching}
of $s$ if, after $k_1 \geq 0$ edit operations
of labels of $G$, $\sigma(p) = s'$,
where $s'$ is a string obtained with $k_2 \geq 0$ 
edit operations of $s$.
In this case, we say that
the approximate matching requires
$k = k_1 + k_2$ edit operations.
We say that $p$ in $G$ is a 
\emph{restricted approximate matching}
of $s$, if, after after $k \geq 0$ edit operations
to labels of $G$, $s = \sigma(p)$ (that is the edit operations are allowed only on the labels of $G$).

Consider a path $p$ that matches 
(exactly, approximately
or restricted approximately)
the query string $s$. 
If 
position $i$, $1 \leq i \leq |s|$, in $s$ 
and 
the $j$-th position, 
$1 \leq j \leq |\sigma(u)|$, of the label 
of vertex $u$ in $p$ match (possibly after an edit operation),
we say that position $i$ is mapped in $\sigma(u)[j]$;
if $|\sigma(u)|=1$, by slightly abusing the notation,
we say that position $i$ is mapped in $u$.

Next, we define the first combinatorial problem
we are interested in.

\begin{problem}
\label{Problem:DefAlign1} \SeqAlign{}\\
\textbf{Input}: A labeled graph $G=(V,E, \sigma)$ and a 
query string $s$, both on alphabet $\Sigma$.\\
\textbf{Output}: An approximate matching $p$
of $s$ that requires the minimum number
of edit operations.
\end{problem}
We define now
the variant of the problem, 
called \SeqAlG{}, where edit operations are allowed
only on the labels of the labeled graph.

\begin{problem}
\label{Problem:DefAlign2} \SeqAlG{}\\
\textbf{Input}: A labeled graph $G=(V,E, \sigma)$ and a 
query string $s$, both on alphabet $\Sigma$.\\
\textbf{Output}: A restricted approximate matching $p$
of $s$ that requires the minimum number of edit operations.
\end{problem}

Consider a labeled graph $G=(V,E,\sigma)$ and a 
query string $s$ over $\Sigma$. 
If there exists a path $p$ in $G$
which is a restricted approximate matching
of $s$, we say that $p$ is \emph{compatible}
with $s$.
Notice that the definition of
compatibility does not put any bound
on the number of edit operations of graph
labels and that
no edit operation is allowed on the query
string. 
In this paper, we introduce a decision problem, called
\SeqComp{}, related to \SeqAlG{}, that asks 
whether there exists a path in
$G=(V,E,\sigma)$ compatible with $s$.

\begin{problem}
\label{Problem:DefComp} \SeqComp{}\\
\textbf{Input}: A labeled graph $G=(V,E, \sigma)$, a query string
$s$, both on alphabet $\Sigma$.\\
\textbf{Output}: Does there exist a path in $G$
that is compatible with $s$?
\end{problem}

\section{Hardness of \SeqComp{}}
\label{sec:hardnessComp}

In this section we consider the computational complexity of \SeqComp{} and we prove that the
problem is indeed NP-complete and it is unlikely
to admit a polynomial kernel.
This result, as discussed in Theorem \ref{teo:approx},
is not only interesting to characterize the
complexity of \SeqComp{}, but also
to give insights into the approximation complexity
of \SeqAlG{}.

We start by proving that \SeqComp{} 
is NP-complete when the labels of the graph have length one,
via a reduction from 
the \LongPath{} problem. The reduction is inspired by that in \cite{DBLP:journals/jal/AmirLL00} to prove the
NP-hardness of \SeqAlG{}.
Then we modify the reduction so that it holds also for 
binary alphabet.
We 
recall the definition of \LongPath{}, 
which is known to be NP-complete \cite{garey}.

\begin{problem}
\label{Problem:Def2} \LongPath{}\\
\textbf{Input}: A directed graph $G=(V_L,E_L)$.\\
\textbf{Output}: Does there exist a simple path 
in $G_L$ of length $h$?
\end{problem}

\subsection{Graph Labels of Length One}
\label{subsec:unbound}

Consider a graph $G_L=(V_L,E_L)$, 
with $V_L=\{ v_1^l, \dots, v_n^l \}$,
which is an instance of \LongPath{},
we define an instance of \SeqComp{} consisting 
of a labeled graph $G=(V,E, \sigma)$ and 
a query string $s$.

First, define the alphabet $\Sigma$ as follows:
$
\Sigma = \{ x_i : 1 \leq i \leq n  \} \cup 
\{ y_i : 1 \leq i \leq h \}.$

The labeled graph $G=(V,E, \sigma)$ is defined as follows: 
\[
V = \{ v_i : v_i^l \in V, 1 \leq i \leq n \},
 \hspace{.5cm }
E = \{ (v_i,v_j) : (v_i^l, v_j^l) \in E_L \}.
\]

The labelling function $\sigma: V \rightarrow \Sigma^*$ of the graph vertices is defined as follows: 
$
\sigma(v_i) = x_i \text{, for each $i$ with $1 \leq i \leq n$}.$

Finally, we define the query string $s = y_1 y_2 \dots y_h$. 

The following lemma allows us to
prove the hardness of \SeqComp{}.

\begin{lemma}
\label{lem:GeneralComp}
Let $G_L=(V_L, E_L)$ be a graph instance of \LongPath{}
and let $(G=(V,E,\sigma),s)$ be the corresponding instance
of \SeqComp{}. There exists a simple path of length $h$ in
$G_L$ if and only if there exists a path in $G$
compatible with $s$.
\end{lemma}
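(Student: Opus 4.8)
The plan is to prove the two directions of the biconditional separately, exploiting the very direct correspondence between the \LongPath{} instance $G_L$ and the constructed \SeqComp{} instance. The key observation to establish first is a characterization of when a single vertex $v_i$ of $G$ can cover a position of the query string: since $\sigma(v_i)=x_i$ has length one, and the query symbols $y_1,\dots,y_h$ are all distinct from every $x_i$ (the alphabet is the disjoint union of the two families), any position $s[k]=y_k$ can be matched by any vertex $v_i$ after exactly one edit operation (substituting $x_i$ by $y_k$). Thus compatibility imposes \emph{no} constraint on which symbol labels which vertex; the only real constraint is that the matching path in $G$ must have exactly $h$ vertices (because $|s|=h$ and each label has length one) and must be a genuine path in $G$, i.e. consecutive vertices must be joined by edges of $E$.

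For the forward direction, I would assume a simple path $v_{i_1}^l v_{i_2}^l \cdots v_{i_h}^l$ of length $h$ in $G_L$. By the definition of $E$, the corresponding sequence $v_{i_1} v_{i_2} \cdots v_{i_h}$ is a path in $G$ on exactly $h$ vertices. I would then define the restricted approximate matching by mapping position $k$ of $s$ to vertex $v_{i_k}$ and performing, for each $k$, the single substitution turning $\sigma(v_{i_k})=x_{i_k}$ into $y_k$. Concatenating the edited labels yields exactly $s=y_1 y_2 \cdots y_h$, so this path is compatible with $s$. The one subtlety worth noting is that the path in $G$ inherited from a \emph{simple} path in $G_L$ is itself simple, but since \SeqComp{} allows non-simple paths this causes no difficulty in this direction.

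For the reverse direction, I would assume a path $p$ in $G$ compatible with $s$. Since every label has length one and $|s|=h$, the path $p$ must consist of exactly $h$ vertices, say $v_{i_1} v_{i_2} \cdots v_{i_h}$, with position $k$ of $s$ mapped to $v_{i_k}$. By the definition of $E$, the reverse translation $v_{i_1}^l v_{i_2}^l \cdots v_{i_h}^l$ is a walk of length $h$ in $G_L$. The main obstacle in this direction is that $p$ need not be simple: \SeqComp{} permits repeated vertices, so a priori the corresponding walk in $G_L$ could revisit a vertex, and then it would not immediately yield a \emph{simple} path of length $h$. I would address this by arguing that no such repetition can in fact occur — and here is where the labeling is used essentially. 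If two distinct positions $k$ and $k'$ were both mapped to the same vertex $v_i$, they would require the single symbol $\sigma(v_i)$ to be simultaneously edited to $y_k$ and to $y_{k'}$; since $y_k\neq y_{k'}$ for $k\neq k'$, and each vertex label is one symbol admitting one edited value in the matching, this is impossible. Hence all $h$ vertices in $p$ are distinct, the walk in $G_L$ is a simple path of length $h$, and the equivalence follows. I would conclude by remarking that this establishes Lemma \ref{lem:GeneralComp}, and that combined with the NP-completeness of \LongPath{} and the obvious membership of \SeqComp{} in NP it yields NP-completeness of \SeqComp{} with unary-length labels.
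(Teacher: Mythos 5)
Your proposal is correct and follows essentially the same route as the paper: translate the simple path vertex-by-vertex and edit each length-one label to the corresponding $y_k$ for the forward direction, and for the converse observe that a compatible path must have exactly $h$ vertices and must be simple because a repeated vertex would force its single (edited) label to equal two distinct symbols $y_k \neq y_{k'}$. Your write-up merely makes explicit the repetition argument that the paper states in one line (``$p$ must be a simple path, since $s$ consists of $h$ distinct symbols''), so no substantive difference.
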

\begin{proof}
Consider a simple path 
$v_{i_1}^l v_{i_2}^l \dots  v_{i_h}^l$
in $G_L$. Then consider the corresponding path 
$v_{i_1} v_{i_2} \dots  v_{i_h}$
in $G$ and edit the symbol of each vertex
$v_{i_j}$, with $1 \leq j \leq h$, so that it 
is associated with symbol $y_i$. It follows
that $p$ matches $s$.
Then $v_{i_1} v_{i_2} \dots  v_{i_h}$ is 
a path of $G$ compatible with 
$s$.

Consider a path $p= v_{i_1} v_{i_2} \dots  v_{i_h}$
in $G_L$ compatible with $s$. 
Notice that $p$ must be a simple path,
since $s$ consists of $h$ distinct symbols.
As a consequence, the corresponding path 
$v_{i_1}^l v_{i_2}^l \dots  v_{i_h}^l$
in $G_L$
is a simple path of length $h$.
\qed
\end{proof}
Lemma \ref{lem:GeneralComp} 
and the NP-completeness of \LongPath{} \cite{garey}
allow to prove the following result.

\begin{theorem}
\label{teo_hard1}
\SeqComp{} is NP-complete even when the labels of the graph have length one.
\end{theorem}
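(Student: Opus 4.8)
The plan is to establish both membership in NP and NP-hardness, since the theorem asserts NP-completeness. For membership, I would argue that a path $p$ compatible with $s$ can be verified in polynomial time: a certificate is the sequence of vertices of $p$, and since compatibility places no bound on the number of edit operations of graph labels, one need only check that $\sigma(p)$ has the same length as $s$ (because every symbol can be freely substituted). When labels have length one, this amounts to checking that $p$ visits exactly $|s|$ vertices along valid edges, which is clearly polynomial. I would note that the path need not be simple in general, but the certificate length is still polynomially bounded by $|s|$, so membership in NP holds.

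For hardness, the work is essentially already done by Lemma~\ref{lem:GeneralComp}. I would invoke the reduction constructed just above the lemma, which maps an instance $G_L=(V_L,E_L)$ of \LongPath{} to an instance $(G=(V,E,\sigma),s)$ of \SeqComp{} in which every label $\sigma(v_i)=x_i$ has length one and $s=y_1y_2\dots y_h$. The key point is that this transformation is computable in polynomial time: $G$ has the same vertex and edge sets as $G_L$ up to renaming, the alphabet $\Sigma$ has size $n+h$, and $s$ has length $h$, all polynomial in the size of the input. Lemma~\ref{lem:GeneralComp} then gives the correctness of the reduction, namely that $G_L$ has a simple path of length $h$ if and only if $G$ contains a path compatible with $s$.

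Combining these, I would conclude that \SeqComp{} is NP-hard by the polynomial-time many-one reduction from the NP-complete problem \LongPath{} (whose NP-completeness is cited to~\cite{garey}), and NP-complete together with the membership argument. Crucially, because the labels in the constructed instance all have length one, the hardness holds even under this restriction, which is exactly the statement of the theorem.

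I do not anticipate a serious obstacle here, as the theorem is a direct corollary of the preceding lemma plus a routine membership check; the only point requiring a little care is the NP-membership certificate, where one must be careful that the compatible path may be non-simple and thus argue that its relevant length (matching $|s|$) keeps the certificate polynomially bounded. Since $s$ has fixed length $h$ and a compatible path matches $s$ position by position, the path visits exactly $|s|$ vertices, so this concern is easily dispatched.
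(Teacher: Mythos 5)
Your proposal is correct and follows essentially the same route as the paper: NP-membership via a polynomial-time verifiable path certificate, and NP-hardness by invoking the reduction from \LongPath{} whose correctness is Lemma~\ref{lem:GeneralComp}, noting that the constructed labels all have length one. Your additional observation that compatibility reduces to a length check (since every label symbol may be freely substituted) is a harmless, slightly more explicit version of the paper's verification step.
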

\begin{proof}
Consider a graph $G=(V,E,\sigma)$, and a
path $p$ in $G$. 
It can be checked in
polynomial time if $p$ is compatible with $s$, 
so \SeqComp{} is in NP.

The reduction constructs a labeled graph with labels
of length one. From Lemma \ref{lem:GeneralComp} and from 
the hardness of \LongPath{} \cite{garey}, 
it follows that \SeqComp{} is NP-hard even when the labels
of the graph have length one.
\qed
\end{proof}


Notice that the reduction we have described 
is also a Polynomial Parameter Transformation \cite{DBLP:journals/tcs/BodlaenderJK13} from
\LongPath{} parameterized by $h$ to \SeqComp{} parameterized
by $|s|$,
as $|s|=h$.
Since \LongPath{} when parameterized by $h$ does not
admit a polynomial kernel unless $NP \subseteq coNP/Poly$~\cite{DBLP:journals/jcss/BodlaenderDFH09}, the reduction
leads to the following result.

\begin{corollary}
\label{cor:kernel}
The \SeqComp{} problem parameterized
by $|s|$ does not admit a polynomial kernel unless 
$NP \subseteq coNP/Poly$ even when the labels of the graph have length one.
\end{corollary}

\subsection{Binary Alphabet}

Next, we show that the \SeqComp{} problem is
NP-complete even on binary alphabet.
The reduction is similar to the reduction of 
the Section \ref{subsec:unbound}, except for
the definition of 
the query string $s$ and the labeling 
$\sigma : V \rightarrow \Sigma^*$ of the 
labeled graph.

Consider a graph $G_L=(V_L,E_L)$, 
with $V_L=\{ v_1^l, \dots, v_n^l \}$, that is an instance of \LongPath{},
we define a corresponding instance $(G=(V,E, \sigma),s)$ of \SeqComp{}.
The alphabet is binary, 
hence $\Sigma = \{ 0,1 \}$.
Next, we define the labeled graph $G=(V,E, \sigma)$. The sets $V$ of vertices and $E$ of edges are defined
as in Section \ref{subsec:unbound}.
For each $v_i \in V$, with $1 \leq i \leq h$,
$\sigma(v_i) = 0^{h}$, namely it is a string
consisting of $h$ occurrences of symbol $0$.

The construction of the query string $s$ requires the
introduction of strings $s_i$, 
with $1 \leq i \leq h$, having length $h$ and defined as follows:

\[
s_i[i] = 1; \hspace{1cm}  s_i[j] = 0, \text{ with $1 \leq j \leq h$ and $j \neq i$.}
\]

%
%


Finally, $s$ is defined as the concatenation 
of $s_1$,$s_2$, $\dots s_n$, that is
$s = s_1\ s_2\ \dots s_n$.



Next, we prove the correctness of
the reduction.

\begin{lemma}
\label{lem:BinaryComp}
Let $G_L=(V_L, E_L)$ be a graph instance of \LongPath{}
and let $(G=(V,E,\sigma),s)$ be the corresponding instance
of \SeqComp{} on binary alphabet. 
There exists a simple path of length $h$ in
$G_L$ if and only if there exists a path 
compatible with $s$ in $G$.
\end{lemma}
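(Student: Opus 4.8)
The plan is to mirror the correctness argument of Lemma~\ref{lem:GeneralComp}, but to replace the role played there by the $h$ pairwise-distinct query symbols $y_1, \dots, y_h$ with the $h$ pairwise-distinct length-$h$ blocks $s_1, \dots, s_h$ that now make up $s = s_1 s_2 \cdots s_h$ (so that $|s| = h^2$). The decisive fact I will exploit is that an edit operation acts on the \emph{label of a vertex}, not on a single occurrence of that vertex along the path; hence if a vertex is traversed twice, both traversals contribute the same (edited) length-$h$ block.

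For the forward direction, I would start from a simple path $v_{i_1}^l v_{i_2}^l \cdots v_{i_h}^l$ of length $h$ in $G_L$ and take the corresponding path $p = v_{i_1} v_{i_2} \cdots v_{i_h}$ in $G$, whose edges exist by construction and whose vertices are pairwise distinct because the original path is simple. Each $\sigma(v_{i_j}) = 0^h$, so I would edit, independently on each of these $h$ distinct vertices, the single symbol in position $j$ of $\sigma(v_{i_j})$ from $0$ to $1$, turning that label into $s_j$. Concatenating yields $\sigma(p) = s_1 s_2 \cdots s_h = s$, so $p$ is compatible with $s$.

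For the converse, I would take a path $p = v_{i_1} \cdots v_{i_k}$ compatible with $s$ and first carry out a rigid-alignment step: since every label has length exactly $h$, editing does not change lengths, so $\sigma(p)$ has length $kh$, and matching $s$ (of length $h^2$) forces $k = h$. Moreover the $j$-th label is then aligned precisely with the $j$-th length-$h$ block of $s$, namely $s_j$; hence, after editing, the label of $v_{i_j}$ must equal $s_j$ for every $j$ with $1 \leq j \leq h$.

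The main obstacle---and the place where the binary construction genuinely differs from the length-one one, since here all labels are identical---is arguing that $p$ is simple. I would proceed by contradiction: suppose $v_{i_a} = v_{i_b}$ for some $a \neq b$. Because an edit modifies the vertex's label once and for all, the (edited) label of this common vertex would have to equal both $s_a$ and $s_b$ simultaneously; but $s_a$ carries its unique symbol $1$ in position $a$ while $s_b$ carries it in position $b \neq a$, so $s_a \neq s_b$, a contradiction. Thus $v_{i_1}, \dots, v_{i_h}$ are pairwise distinct, and the corresponding vertices $v_{i_1}^l, \dots, v_{i_h}^l$ form a simple path of length $h$ in $G_L$, which completes the equivalence.
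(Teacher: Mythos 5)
Your proposal is correct and follows essentially the same route as the paper's proof: the forward direction edits each vertex label on the path to the corresponding block $s_j$, and the converse uses the fact that all labels have length $h$ to force a rigid block-to-vertex alignment, with the pairwise distinctness of $s_1,\dots,s_h$ ruling out repeated vertices. Your explicit remark that an edit acts on a vertex label once and for all (so a repeated vertex would have to carry two different blocks) is the same fact the paper uses implicitly, just stated more carefully.
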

\begin{proof}
Consider a simple path $v_{i_1}^l v_{i_2}^l \dots  v_{i_h}^l$
in $G_L$. Then consider the corresponding path 
$v_{i_1} v_{i_2} \dots  v_{i_h}$
in $G$ and edit the label of each vertex $v_{i_j}$,
with $1 \leq j \leq h$, such that is associated
with string $s_j$.
Then the resulting string is an exact match of $s$, hence $v_{i_1} v_{i_2} \dots  v_{i_h}$ is a
path compatible with $s$.

Consider a path $p = v_{i_1} v_{i_2} \dots  v_{i_h}$
in $G$ that is compatible with $s$. 
Since $\sigma(p)$  must match $s$ after some symbol substitutions and, by construction,
$|\sigma(v_j)| = |s_l|$,
for each $1 \leq j \leq n$ and $1 \leq l \leq h$,
it follows that the positions of $s_l$, $1 \leq l \leq h$,
are mapped to the positions of $\sigma(v_{i_t})$, 
for some $t$ with $1 \leq t \leq h$.
Moreover, since $s_l \neq s_q$, with $t \neq q$,
all the vertices in $p$ are distinct
and $p$ is a simple path in $G$ of length $h$.
As a consequence the corresponding path 
$v_{i_1}^l v_{i_2}^l \dots  v_{i_h}^l$ in $G_L$
is a simple path of length $h$, thus concluding the proof.
\qed
\end{proof}

Thus, based on Lemma \ref{lem:BinaryComp}, we can prove the following result.

\begin{theorem}
\label{teo_hard2}
\SeqComp{} is NP-complete even on binary alphabet.
\end{theorem}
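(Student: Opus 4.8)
The plan is to reuse the argument of Theorem \ref{teo_hard1} with Lemma \ref{lem:BinaryComp} replacing Lemma \ref{lem:GeneralComp}. Membership in NP is independent of the alphabet and is obtained exactly as before: any path $p$ that is compatible with $s$ must satisfy $|\sigma(p)| = |s|$, since substitutions on the labels preserve length, so a compatible path has polynomially many vertices and serves as a succinct certificate; checking whether a guessed $p$ is compatible then amounts to testing $|\sigma(p)| = |s|$ and is therefore polynomial.

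For hardness I would observe that the construction of this subsection transforms an arbitrary \LongPath{} instance $G_L$ into a labeled graph over $\Sigma = \{0,1\}$ together with a query string $s$ of length $nh$, both computable in time polynomial in the size of $G_L$. Lemma \ref{lem:BinaryComp} establishes that $G_L$ contains a simple path of length $h$ exactly when $G$ admits a path compatible with $s$, so the transformation is a valid polynomial-time reduction. Together with the NP-completeness of \LongPath{} \cite{garey}, this yields NP-hardness of \SeqComp{} restricted to binary alphabet, and combined with membership in NP the stated NP-completeness follows.

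Since Lemma \ref{lem:BinaryComp} already carries the full combinatorial correctness of the reduction, there is no genuine obstacle left in the theorem itself; the only point meriting a line of care is confirming that the certificate in the NP-membership argument is polynomially bounded, which holds because every vertex label has fixed length $h$, forcing any compatible path to consist of at most $n$ vertices.
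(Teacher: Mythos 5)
Your proposal follows the paper's proof of this theorem essentially verbatim: NP membership via the path as a polynomial-size certificate, and NP-hardness from Lemma \ref{lem:BinaryComp} together with the NP-completeness of \LongPath{}. One detail in your membership argument is wrong as stated, though easily repaired: checking that a guessed path $p$ is compatible with $s$ is \emph{not} just the test $|\sigma(p)| = |s|$. Since $p$ need not be simple and edit operations are applied to the labels of $G$ (hence affect every occurrence of a vertex on $p$ simultaneously), you must also verify that all positions of $s$ mapped to the same position of the same vertex's label carry the same symbol; e.g.\ a path $vuv$ with unit labels and $s=011$ has matching length but is not compatible. This consistency check is still polynomial, so the conclusion stands, but the verification procedure needs that extra condition.
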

\begin{proof}
As for Theorem \ref{teo_hard1}, 
given a path $p$ in $G$, it can be checked in
polynomial time if $p$ is compatible with $s$, 
so \SeqComp{} is in NP.
The reduction defines $\Sigma=\{0,1\}$, hence a binary
alphabet. From Lemma \ref{lem:BinaryComp} and from 
the hardness of \LongPath{} \cite{garey}
it follows that
\SeqComp{} on binary alphabet is NP-hard.
\qed
\end{proof}

The results of Theorem \ref{teo_hard1} and
Theorem \ref{teo_hard2} have a consequence
not only on the complexity of \SeqComp{}, 
but also on the approximation of \SeqAlG{}.

\begin{theorem}
\label{teo:approx}
The \SeqAlG{} problem cannot be approximated within any factor in polynomial time, unless P = NP,
even when the labels of the graph have length one
or when the alphabet is binary.
\end{theorem}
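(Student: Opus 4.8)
The plan is to derive Theorem~\ref{teo:approx} as a direct consequence of the NP-completeness results established in Theorem~\ref{teo_hard1} and Theorem~\ref{teo_hard2}, using a standard gap-introduction argument tailored to the fact that any multiplicative approximation factor is meaningless when the optimum can be zero. The key observation is that \SeqComp{} is precisely the decision problem asking whether a feasible \SeqAlG{} solution of cost $0$ exists: a path $p$ compatible with $s$ is exactly a restricted approximate matching requiring $0$ edit operations, since compatibility allows any number of edits but imposes no lower bound, and a cost-$0$ restricted approximate matching is an exact match. Thus on the instances produced by the reductions of Sections~\ref{subsec:unbound} and the binary-alphabet subsection, the optimal \SeqAlG{} value is $0$ on \textsf{yes}-instances of \LongPath{} and strictly positive on \textsf{no}-instances.

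First I would formalize the reduction from \SeqComp{} to the approximation problem. Suppose, for contradiction, that there is a polynomial-time algorithm $\mathcal{A}$ approximating \SeqAlG{} within \emph{some} factor $\rho \geq 1$ (where $\rho$ may even depend on the input size). Given an instance of \SeqComp{}, which by Theorem~\ref{teo_hard1} or Theorem~\ref{teo_hard2} is NP-complete even with unit-length labels or binary alphabet, I run $\mathcal{A}$ on the same labeled graph $G$ and query string $s$ viewed as a \SeqAlG{} instance. If a compatible path exists, the optimum is $0$, so any $\rho$-approximation must also return a solution of cost at most $\rho \cdot 0 = 0$, hence cost exactly $0$, i.e.\ a compatible path. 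If no compatible path exists, every restricted approximate matching has cost at least $1$, so $\mathcal{A}$ cannot return a cost-$0$ solution. Therefore the output of $\mathcal{A}$ decides \SeqComp{} in polynomial time, which forces $P = NP$.

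The second step is to confirm that a feasible \SeqAlG{} solution always exists, so that the approximation question is well posed and $\mathcal{A}$ is obliged to return \emph{some} restricted approximate matching whose cost we can inspect. This holds provided $G$ contains at least one path whose total label length equals $|s|$; in the reductions this is guaranteed because $G$ inherits enough structure from $G_L$ (in the length-one case any path of $h$ vertices has label length $h = |s|$, and in the binary case any path of $h$ vertices has label length $h \cdot h = |s|$). I would state this feasibility condition explicitly, noting that on the reduction instances it is automatically met, so the minimization is over a nonempty feasible set and the optimum equals $0$ exactly when \SeqComp{} answers \textsf{yes}.

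The only real subtlety—and what I would flag as the main point to handle carefully rather than a genuine obstacle—is the interaction between the approximation factor and a zero optimum: the argument works precisely \emph{because} inapproximability within any factor follows from hardness of distinguishing optimum $0$ from optimum $\geq 1$, and no finite $\rho$ can bridge that gap. I would emphasize that this is why the result is ``within any factor'' rather than a bounded inapproximability ratio, and that the length-one and binary-alphabet qualifiers carry over verbatim from Theorems~\ref{teo_hard1} and~\ref{teo_hard2} since the reductions use exactly those instances. No intricate computation is needed; the entire proof is the observation that a factor-$\rho$ approximation would solve an NP-complete decision problem.
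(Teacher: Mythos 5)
Your high-level strategy---use the NP-completeness of \SeqComp{} to turn any polynomial-time approximation algorithm $\mathcal{A}$ for \SeqAlG{} into a polynomial-time decision procedure---is the same as the paper's, but the mechanism you build it on is wrong. You assert that a path compatible with $s$ is ``exactly a restricted approximate matching requiring $0$ edit operations,'' and hence that on the reduction instances the optimum is $0$ for \textsf{yes}-instances and $\geq 1$ for \textsf{no}-instances. Neither claim is correct. Compatibility is defined as the existence of \emph{some} $k \geq 0$ edit operations after which $\sigma(p)=s$; it is feasibility, not zero cost. On the instances of Section~\ref{subsec:unbound} the vertex labels are the symbols $x_i$, none of which occurs in $s = y_1 y_2 \dots y_h$, so every feasible path costs exactly $h$ edit operations: the optimum on a \textsf{yes}-instance is $h$, not $0$. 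Consequently your decision rule ``accept iff $\mathcal{A}$ returns a cost-$0$ solution'' would reject the \textsf{yes}-instances as well, and the reduction fails.

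The other half of the gap is your feasibility claim: you argue that a feasible solution always exists because $G$ contains a path of $h$ vertices whose total label length equals $|s|$. Having the right length is not enough to be a restricted approximate matching: since $s$ consists of $h$ distinct symbols and each vertex carries a single (edited) label, any matching walk must be simple, so on a \textsf{no}-instance of \LongPath{} the feasible set is \emph{empty}. That emptiness is exactly what drives the paper's proof: an approximation algorithm with any factor must output a feasible solution whenever one exists and cannot output one otherwise, so simply observing whether $\mathcal{A}$ returns \emph{any} solution decides \SeqComp{}. Replacing your ``optimum $0$ versus optimum $\geq 1$'' gap with the correct ``feasible versus infeasible'' gap repairs the argument and recovers the paper's proof; the qualifiers about unit-length labels and binary alphabet then do carry over from Theorems~\ref{teo_hard1} and~\ref{teo_hard2} as you say.
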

\begin{proof}
The NP-completeness of \SeqComp{} implies that,
given an instance $(G=(V,E,\sigma), s)$,
even deciding whether there 
exists a feasible solution of \SeqAlG{},
with any number of edit operations in $G$,
is NP-complete. Hence if there exists a polynomial-time
approximation algorithm $\mathcal{A}$
for \SeqAlG{}, with some approximation
factor $\alpha$, 
it follows that $\mathcal{A}$ can be
used to decide the \SeqComp{} problem:
if $\mathcal{A}$ returns an approximated 
solution for \SeqAlG{}
with input $(G,s)$, then it follows that
there exists a path in $G$ compatible with $s$,
if $\mathcal{A}$ does not return an approximated solution for \SeqAlG{}
with input $(G,s)$, then
there is no path in $G$ compatible with $s$.
Since \SeqComp{} is NP-complete,
when the labels of the graph have length one (by Theorem \ref{teo_hard1} )
and on binary alphabet (by Theorem \ref{teo_hard2}), 
then there does not
exist a polynomial-time approximation algorithm
with any approximation factor for \SeqAlG{} when the graph
labels have length one
or when the alphabet is binary, unless P = NP.
\qed
\end{proof}

\section{Hardness of Parameterization} 
\label{sec:hardAlign}

In this section, we consider the parameterized complexity
of \SeqAlG{} and \SeqAlign{}.
The reduction we present allows us to prove
that \SeqAlG{} and \SeqAlign{},
when parameterized by the number of edit operations,
are W[2]-hard
for a labeled graph having distance one from a DAG.
Moreover, the same reduction will allow us to
prove that \SeqAlign{} 
is not approximable within factor 
$\Omega(\log (|V|))$ and $\Omega(\log (|s|))$.

%

We prove these results by presenting a reduction, that is
parameterized \cite{Niedermeier:2006,DBLP:series/txcs/DowneyF13} and approximate preserving \cite{DBLP:books/daglib/0030297}, from  the \SetCover{} problem.  We recall here the definition
of \SetCover{}.

\begin{problem}
\label{Problem:Def3} \SetCover{}\\
\textbf{Input}: A collection $C=\{S_1, \dots, S_m\}$ of sets over a universe $U=\{ u_1, \dots, u_n\}$.\\
\textbf{Output}: A subcollection $C'$ of $C$ of minimum 
cardinality such that
for each $u_i \in U$, with $1 \leq i \leq n$, 
there exists a set in $C'$ containing
$u_i$.
\end{problem}

First, we focus on \SeqAlG{}, then we show that
the same reduction can be applied to \SeqAlign{}.

Given an instance $(U,C)$ of \SetCover{},
in the following we define 
an instance $(G=(V,E, \sigma),s)$
of \SeqAlG{} (see Fig. \ref{fig:Whardness} for an example). 
We start by defining the alphabet $\Sigma$:
\[
\Sigma = \{ x_i: 0 \leq i \leq m \} \cup \{ y_i: 1 \leq i \leq n \} \cup \{ z \}.
\]
Then, we define the labeled graph $G=(V,E, \sigma)$:
\[
V =  \{v_i : 0 \leq i \leq m\} \cup 
\{ v_{i,j}: 1 \leq i \leq m, 1 \leq j \leq |S_i| \} 
\]
\[
E =  \{(v_0 ,v_i) : 1 \leq i \leq m \} \cup 
\{ (v_i, v_{i,j}): 1 \leq i \leq m, 1 \leq j \leq  |S_i| \} \cup \]
\[
\{ (v_{i,j},v_0): 1 \leq i \leq m, 1 \leq j \leq |S_i| \}.
\]
Now, we define the labeling $\sigma$ 
of the vertices of $G$:
\begin{itemize}


\item  $\sigma(v_i)= x_i$, $0 \leq i \leq m$

\item  $\sigma(v_{i,l})= y_j$, $1 \leq i \leq m$, $1 \leq l \leq |S_i|$ and $1 \leq j \leq n$,
where the $l$-th element of $S_i$ is $u_j$ 
(based on some ordering of the elements in $S_i$)

\end{itemize}

The query string $s$ is defined as follows:
$s = x_0\ z\ y_1\ x_0\ z\ y_2 \dots  x_0\ z\ y_n$.

\begin{figure}
\centering
\includegraphics[scale=.35]{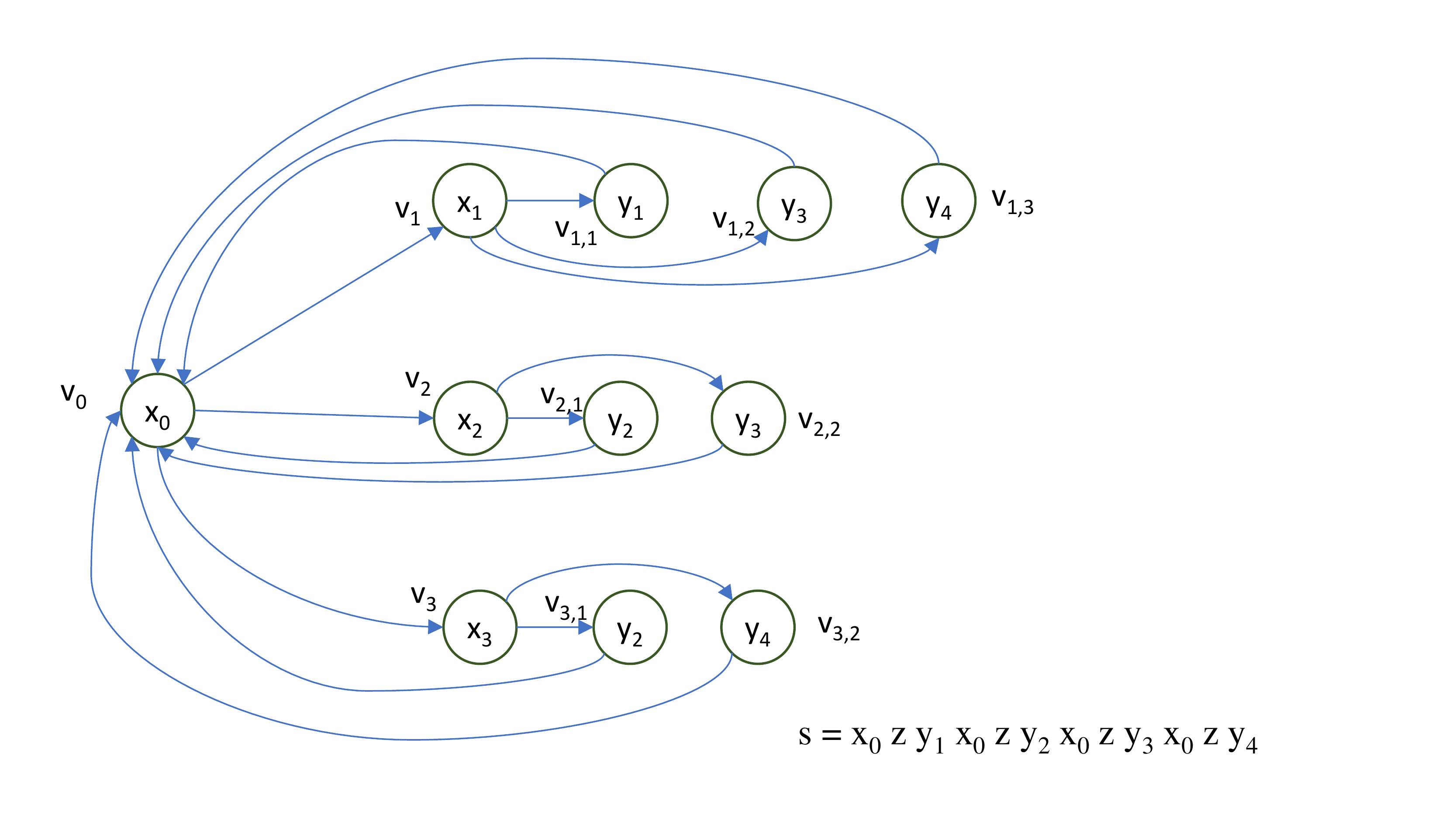}
\caption{A labeled graph $G$ and a query string $s$ associated with the following instance
of \SetCover{}:  $U=\{u_1, u_2, u_3, u_4\}$; 
$S_1=\{ u_1, u_3, u_4\}$, $S_2 = \{ u_2, u_3\}$,
$S_3 = \{u_2, u_4\}$. Inside each vertex we represent
its label.
}
\label{fig:Whardness}
\end{figure}

First, we prove that the labeled graph $G$,
has distance one from a DAG, that is by removing
a vertex of $G$ (namely, $v_0)$, we obtain a DAG. 

\begin{lemma}
\label{lem:pathwidth}
Let $(C,U)$ be an  instance of \SetCover{}
and let $(G=(V,E,\sigma),s)$ be the corresponding instance
of \SeqAlG{}.
Then, $G$ has distance one from a DAG.
\end{lemma}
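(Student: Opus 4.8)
The plan is to show directly that deleting the single vertex $v_0$ from $G$ yields an acyclic graph, which by definition means $G$ has distance one from a DAG. First I would examine the structure of the edge set $E$ as defined in the construction. Every edge in $E$ falls into exactly one of three families: edges of the form $(v_0, v_i)$ leaving $v_0$, edges of the form $(v_i, v_{i,j})$ from a set-vertex to one of its element-vertices, and edges of the form $(v_{i,j}, v_0)$ returning to $v_0$. The key observation is that $v_0$ is the only vertex that appears both as the tail of some edges and as the head of others in a way that could close a cycle; all the remaining edges flow in one direction from the $v_i$ layer to the $v_{i,j}$ layer.

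Next I would argue that after removing $v_0$ (and all edges incident to it), the only surviving edges are those of the form $(v_i, v_{i,j})$ with $1 \leq i \leq m$ and $1 \leq j \leq |S_i|$. I would then define an explicit topological ordering witnessing acyclicity: place every set-vertex $v_i$ before every element-vertex $v_{i,j}$. Since every remaining edge goes from a $v_i$ to a $v_{i,j}$, and there are no edges among the $v_i$ themselves nor among the $v_{i,j}$ themselves, every edge respects this two-layer ordering. A graph admitting such an ordering has no directed cycle, so $G - v_0$ is a DAG.

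Finally I would conclude that, since removing exactly one vertex turns $G$ into a DAG, the distance of $G$ from a DAG is at most one. It remains only to confirm that $G$ itself is not already a DAG, so that the distance is not zero; this is immediate because $G$ contains directed cycles through $v_0$, for instance $v_0\, v_i\, v_{i,j}\, v_0$ for any valid $i,j$, using an edge of each of the three families. Hence the distance is exactly one.

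I do not anticipate a serious obstacle here: the argument is essentially a careful reading of the edge set and the identification of $v_0$ as the unique vertex whose deletion breaks all cycles. The only point requiring minor care is the bookkeeping that, once $v_0$ is gone, no cycle can remain, which the explicit layered topological order handles cleanly; the three-family case analysis of the edges is routine and I would not grind through it in detail beyond stating it.
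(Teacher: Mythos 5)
Your proposal is correct and follows essentially the same approach as the paper: delete $v_0$ and observe that the only remaining edges go from the $v_i$ layer to the $v_{i,j}$ layer, so the result is acyclic. The extra details you supply (the explicit two-layer topological order and the check that $G$ itself contains a cycle through $v_0$) are fine but not needed beyond what the paper's one-line argument already establishes.
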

\begin{proof}
Consider graph $G=(V,E, \sigma)$. By removing
vertex $v_0$, we obtain a set of $m$ disconnected 
DAGs, each one induced by vertices $v_i$ and $v_{i,j}$, 
with $1 \leq i \leq m$ and $1 \leq j \leq |S_i|$.
Hence the graph we obtain from the removal of $v_0$ 
is a DAG and the lemma holds.
\qed
\end{proof}

Next, we present the main result to prove the correctness of the reduction.

\begin{lemma}
\label{lem:sub1}
Let $(C,U)$ be an  instance of \SetCover{}
and let $(G=(V,E,\sigma),s)$ be the corresponding instance
of \SeqAlG{}.
There exists a cover $C'$ of $U$ of cardinality $h < n$ 
if and only if there exists a solution of \SeqAlG{}
that requires $h$ edit operations.
\end{lemma}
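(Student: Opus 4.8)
The plan is to establish the two directions of the biconditional by exhibiting an explicit correspondence between set covers and restricted approximate matchings, so that the number of edit operations equals the cover size. First I would understand the structure of the query string and the graph. The string $s$ consists of $n$ blocks, the $i$-th block being $x_0\,z\,y_i$, so $s$ has length $3n$. A path in $G$ that matches $s$ (after edits on graph labels only) must traverse $v_0$ and then, for each block, visit some $v_i$ (labeled $x_i$) followed by some $v_{i,j}$ (labeled $y_{j'}$), before returning to $v_0$. The key observation is which symbols match for free and which require an edit: the symbol $x_0$ in $s$ matches $\sigma(v_0)=x_0$ with no edit, the symbol $z$ has no matching vertex label in $G$ (since $z$ never appears as a label) and so forces exactly one edit per block, and the symbol $y_i$ matches $\sigma(v_{i,l})=y_{j}$ exactly when the chosen set element equals $u_i$, i.e.\ with no edit iff the visited $v_{i,l}$ encodes $u_i$.

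For the forward direction, given a cover $C'=\{S_{k_1},\dots,S_{k_h}\}$ of cardinality $h$, I would build a path that, for each element $u_i$, routes through the gadget of some set $S_{k_t}\in C'$ containing $u_i$, matching the $y_i$ symbol exactly. The subtle accounting point, and the reason the theorem counts $h$ edits rather than some per-element quantity, is the role of the $x_i$ symbols: the middle symbol in each block must be edited (the $z$ forcing), but one must argue that the edit operations on the graph labels are \emph{shared} across all occurrences of a given vertex's label, so that re-using a single vertex $v_i$ (or its $z$-substitution) across multiple blocks costs only a constant per set rather than per use. Concretely, because edit operations are performed once on the labels of $G$ and then the path may revisit vertices, selecting $h$ sets corresponds to editing the labels of exactly $h$ of the intermediate vertices (one per chosen set, turning some $x_i$ into $z$), giving a total of $h$ edit operations. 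I would verify that every element is covered and that no further edits are needed on the $y$-labels precisely because $C'$ is a cover.

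For the reverse direction, I would start from a restricted approximate matching $p$ requiring $h$ edit operations with $h<n$ and read off a cover. The path visits a collection of set-gadgets; the sets whose gadgets are used (equivalently, whose $x_i$-vertex was edited to produce the required $z$) form the candidate subcollection $C'$. The claim is that $C'$ covers $U$: if some $u_i$ were uncovered, then the $y_i$-block could not be matched without an additional edit on a $y$-label, and I would argue that the total count then exceeds $h$, contradicting minimality or the bound $h<n$. The constraint $h<n$ is what rules out the degenerate ``edit everything'' solution and forces the solution to exploit the cover structure; I would use it to guarantee that at least one block is matched via an exact $y$-symbol, anchoring the correspondence.

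The main obstacle I expect is the bookkeeping of which edits are ``reused'' across the $n$ blocks versus which are forced fresh in each block, i.e.\ correctly formalizing that an edit operation on a label of $G$ is a one-time cost on the graph and not a per-traversal cost along the (non-simple) path $p$. Getting this accounting exactly right is what makes the edit count equal to the cover cardinality $h$ rather than something like $n+h$ or $hn$; in particular I must carefully separate the $n$ forced edits implicit in the $z$-matching from the way the path's structure lets a single set-gadget serve multiple elements, and confirm that the definition of \emph{restricted approximate matching} (edits applied to $\sigma$ before fixing the path) supports this sharing. Once that invariant is pinned down, both directions follow by the explicit path/cover translation sketched above.
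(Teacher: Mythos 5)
Your overall strategy is the paper's own: the same path/cover correspondence, and you correctly isolate the one subtle accounting point of the forward direction, namely that an edit is charged once per vertex label and therefore a set-vertex $v_i$ revisited in many blocks costs a single $x_i\to z$ edit, so the total is $h$ rather than $n$ or $n+h$. The forward direction as you sketch it matches the paper (which additionally normalizes $C'$ so that each element is covered by exactly one chosen set, making the path construction unambiguous).

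The reverse direction, however, has two genuine gaps. First, your argument that the extracted subcollection covers $U$ does not work as stated: if some $u_i$ is handled by editing a $y$-label on a vertex $v_{j,l}$ with $\sigma(v_{j,l})\neq y_i$, that edit is simply one of the $h$ edits already in the budget, so the total does \emph{not} exceed $h$ --- instead the set $C'$ of $z$-edited gadgets has size strictly less than $h$ and may fail to cover $u_i$. The paper closes this with an exchange argument: replace the gadget serving $u_i$ by the gadget of some set $S_b\ni u_i$, editing $\sigma(v_b)$ to $z$ if needed; this trades the $y$-edit for at most one new $z$-edit, so the count does not increase and afterwards every element is covered by an edited set-vertex. (Equivalently, one can augment $C'$ with one set per $y$-edit, keeping $|C'|\leq h$.) Second, you assert the block structure of the matching --- $v_0$, then $v_i$, then $v_{i,j}$, repeated --- but this must be proved: a compatible path could in principle start at some $v_i$ or $v_{i,j}$, phase-shifting the alignment so that the $y_q$ positions of $s$ land on $v_0$ or on the $v_l$ vertices. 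The paper explicitly rules these cases out by showing each such phase shift forces at least $n>h$ edits, and only then concludes the path starts at $v_0$; your appeal to ``$h<n$ rules out the degenerate solution'' points at the right hypothesis but does not carry out this case analysis, which is the main technical content of the reverse direction.
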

\begin{proof}
Consider 
$C'= \{S_{i_1}, S_{i_2}, \dots , S_{i_h}\}$, with
$C' \subseteq C$, that covers $U$. 
Starting from $C'$, define a set $C^*$ consisting of sets $S^*_{i_j}$, $1 \leq j \leq h$, defined as follows:
$S^*_{i_j}$ contains the elements of $U$ covered
by $S_{i_j}$ and not covered by a set
$S_{i_l}$, with $l < j$, in $C'$. Hence, since $C'$
covers $U$, it follows
that by construction each element in $U$ is covered by 
exactly one set in $C^*$.

For each element $u_i \in U$, with $1 \leq i \leq n$,
we denote by $S^*(u_i)$ the set of $C^*$ that covers $u_i$,
and by $v(S^*(u_i))$ the corresponding vertex 
in $\{v_1, \dots , v_n\} \subseteq V$. 
Moreover, given $v_j = S^*(u_i)$,
we denote by $y(S^*(u_i))$ the vertex
$v_{j,l}$, with $1 \leq l \leq |S_j|$, such that
$\sigma(v_{j,l}) = y_i$, with $1 \leq l \leq |S_j|$.

Consider the following path $p$ in $G$:
\[
p =  v_0,\ v(S^*(u_1)),\ y(S^*(u_1)),\ v_0,\ v(S^*(u_2)),\ v_0 ,\dots  ,\ v(S^*(u_n)),
\ y(S^*(u_n)).
\]

Now, the symbol associated with each $v(S^*((u_i))$, 
with $1 \leq i \leq n$, is edited to symbol $y$.
The string we obtain is exactly $s$,
since $\sigma(v_0)=x_0$, each symbol associated with 
$v(S^*(u_i))$ is edited to  $y$, 
$1 \leq i \leq n$, and $\sigma(y(S^*(u_i))) = y_i$,
$1 \leq i \leq n$. Moreover, notice that $p$ is
obtained by editing $h$ labels of vertices of $G$. 

We present the second direction of the proof.
Consider a path $p$ in $G$ such that  
$p$ is a restricted approximate matching of $s$ requiring 
at most $h$ edit operations of the labels
of vertices in $p$.
First, we prove some properties of $G$. 
If $v_0$ is removed from $G$, then 
the resulting graph $G'$ contains paths 
consisting of at most $2$ vertices.
Since $|s|= 3n$, there is no path in $G'$ 
that can be a restricted approximate matching of $s$. 
This implies that at least one
position of $s$ is mapped in $v_0$.

Now, assume that the first vertex of $p$ is not $v_0$.
Assume that the first position of $s$ is
mapped in $v_i$, for some $i$ with $1 \leq i \leq m$.
By construction, 
$p = v_i\ v_{i,j}\ v_0\ v_l\ v_{l,t}\ v_0 \dots$,
since $N^+(v_i) = \{ v_{i,j}: 1 \leq j \leq |S_j| \}$,
$N^+(v_{i,j}) = \{ v_0 \}$ and 
$N^+(v_0) = \{ v_i: 1 \leq i \leq m \}$.
Then each occurrence of  
a symbol $y_q$, $1 \leq q \leq n$,
in $s$ is mapped
in $v_0$, while the symbol associated with $v_0$ 
can be at most one of
$y_1 ,\dots , y_n$, thus there is no path 
in $G$ that starts with a vertex $v_i$ and that is
a restricted approximate matching of $s$.

Assume that the first vertex of $p$ is
some vertex $v_{i,j}$, with $1 \leq i \leq m$ and 
$1 \leq j \leq |S_i|$.
By construction, 
$p = v_{i,j}\ v_0\ v_l\ v_{l,t}\ v_0 \dots$.
Hence each position of $s$ containing $z$ is mapped
in vertex $v_0$, while each position of $s$
containing $y_t$, $1 \leq t \leq n$, is mapped
in a vertex $v_q$, with $1 \leq q \leq m$. This last
mapping requires $n > h$ edit operations
of labels of vertices of $G$, violating
the hypothesis that at most $h < n$ edit operations
are applied. 

We can conclude that if $p$ is
a restricted approximate matching of $s$ requiring
$h < n$ edit operations, then 
$v_0$ must be the first vertex
of $p$. 
It follows that each label of a vertex $v_i$, 
$1 \leq i \leq n$, in path $p$ 
must be edited to $z$.
Consider the case that  position $t$ of $s$,
$1 \leq t \leq |s|$,
where $s[t] = y_q$, $1 \leq q \leq n$, 
is mapped to some vertex $v_{i,j}$, 
with $1 \leq i \leq m$ and $1 \leq j \leq |S_i|$,
such that $\sigma(v_{i,j}) \neq y_q$, and that hence
the label of $v_{i,j}$ is edited to $y_q$.
Let $v_a$, with $1 \leq a \leq m$, be the vertex
that precedes $v_{i,j}$ in $p$.
Then, we can modify $p$, so that the number
of edit operations are not increased,
by replacing 
$v_a$ with a vertex $v_b$, with $1 \leq b \leq m$,
and $v_{i,j}$ with $v_{b,l}$, with $1 \leq l \leq |S_b|$,
so that $\sigma(v_{b,l}) = y_q$,
and by editing the label of 
$v_b$ (if it is no already edited) to $z$.
This implies that  
the only vertices of $p$ whose labels are edited
are vertices $v_i$, 
$1 \leq i \leq m$. 

Now, we can define a solution $C'$ of \SetCover{} consisting
of $h$ sets as follows:
$
C' = \{ S_i: \text{the label of vertex $v_i$ in $p$ is edited to } z , 1 \leq i \leq m \}.
$
Since at most $h$ labels of vertices of $p$ are edited (to $z$),
it follows that at most $h$ sets belong to $C'$.
Furthermore, since each vertex with label $y_j$,
$1 \leq j \leq n$, is connected to a vertex
$v_i$ in $p$, $1 \leq i \leq m$, by construction
it follows that 
each element of $U$ belongs to some set in $C'$. 
\qed
\end{proof}

Based on Lemma \ref{lem:pathwidth} and on 
Lemma \ref{lem:sub1}, we can prove the 
following result. 

\begin{theorem}
\label{teo:W1Hard}
The \SeqAlG{} problem is W[2]-hard when parameterized by the number of edit operations, even
when the input graph has distance one from a DAG.
\end{theorem}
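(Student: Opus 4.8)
The plan is to treat the construction purely as a parameterized reduction from \SetCover{} and then assemble the two lemmas already in hand. First I would recall that the decision version of \SetCover{} --- given $(U,C)$ together with an integer $h$, does there exist a subcollection $C' \subseteq C$ with $|C'| \leq h$ covering $U$? --- is W[2]-hard when parameterized by the cover-size bound $h$ \cite{DBLP:series/txcs/DowneyF13}. This is the source problem, and its parameter is exactly $h$.

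Next I would verify that the map $(U,C,h) \mapsto (G=(V,E,\sigma),s,h)$ defined above is a genuine parameterized reduction, which requires two checks. First, it is computable in polynomial (hence FPT) time: the vertex set has size $1 + m + \sum_{i} |S_i|$ and the query string $s$ has length $3n$, both polynomial in $|(U,C)|$, and the labeling is read off directly from the sets. Second, the parameter is preserved: the bound on the number of edit operations permitted in the produced \SeqAlG{} instance is taken to be precisely the cover-size bound $h$, so the target parameter is bounded by a computable function (here the identity) of the source parameter, as required for an FPT reduction.

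The correctness equivalence is exactly the content of Lemma \ref{lem:sub1}: there is a cover of $U$ of cardinality $h$ (with $h < n$) if and only if \SeqAlG{} on $(G,s)$ admits a solution using $h$ edit operations. The only point needing care is the side condition $h < n$. Since \SeqCover{} always admits a cover of size at most $n$ whenever a feasible cover exists (pick, for each element, one set containing it), any instance with $h \geq n$ is trivially a yes-instance; hence W[2]-hardness is already witnessed by instances whose relevant parameter satisfies $h < n$, and the restriction in Lemma \ref{lem:sub1} loses no generality. Composing the reduction with this equivalence transfers W[2]-hardness from \SetCover{} to \SeqAlG{} parameterized by the number of edit operations.

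Finally, to obtain the claimed restriction on the instance class, I would invoke Lemma \ref{lem:pathwidth}, which shows that the produced graph $G$ has distance one from a DAG; thus the hardness already holds on this restricted family, completing the argument. I expect the genuinely nontrivial work to be the correctness equivalence, but that has already been discharged in Lemma \ref{lem:sub1}; within the theorem proof itself the remaining obstacles are mild, namely confirming that the parameter is carried across unchanged and disposing of the $h \geq n$ boundary case. Once \SeqAlG{} is handled, the same instance $(G,s)$ serves for \SeqAlign{}, since on this construction any edit operation on $s$ is never advantageous, so the two variants coincide and the hardness carries over verbatim.
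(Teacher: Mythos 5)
Your proposal is correct and follows the same route as the paper: the theorem is obtained by combining Lemma \ref{lem:pathwidth} (distance one from a DAG), the equivalence of Lemma \ref{lem:sub1}, and the known W[2]-hardness of \SetCover{}. The extra details you supply (polynomial-time computability, identity parameter transfer, and the disposal of the $h \geq n$ boundary case) are correct and only make explicit what the paper leaves implicit; the closing remark about \SeqAlign{} is not needed for this statement, and the paper in fact handles that variant separately with a more careful argument.
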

\begin{proof}
Notice that, by Lemma \ref{lem:pathwidth}, $G$ has
distance one from a DAG.
The W[2]-hardness of \SeqAlign{} follows from
Lemma \ref{lem:sub1} and from the W[2]-hardness
of \SetCover{} \cite{DBLP:journals/tcs/PazM81}.
\qed
\end{proof}

Next, we show that the same reduction 
allows us to prove the W[2]-hardness and the
inapproximability of \SeqAlign{}.
Essentially, we will prove that
we can avoid edit operations of the 
query string.

\begin{theorem}
\label{teo:HardBothSubProbl}
The \SeqAlign{} problem is W[2]-hard when parameterized
by the number of edit operations, even
when the input graph has distance one from a DAG.
Moreover, \SeqAlign{} cannot be approximated within
factor $\Omega(\log (|V|))$ and $\Omega(\log (|s|))$,
unless $P=NP$, even
when the input graph has distance one from a DAG.
\end{theorem}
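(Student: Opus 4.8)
The plan is to reuse verbatim the instance $(G=(V,E,\sigma),s)$ built from $(U,C)$, together with Lemma~\ref{lem:pathwidth} and Lemma~\ref{lem:sub1}, and to show that the optimum of \SeqAlign{} on $(G,s)$ equals the optimum of \SeqAlG{}, i.e. the size $h^*$ of a minimum cover. One inequality is immediate: since \SeqAlign{} permits every operation that \SeqAlG{} permits, its optimum is at most that of \SeqAlG{}, and the forward direction of Lemma~\ref{lem:sub1} already exhibits, from a cover of size $h$, a matching using only $h$ graph-label edits. Hence the whole task reduces to the reverse inequality: allowing substitutions on the query string cannot push the number of edits below $h^*$. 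This is precisely the informal claim that \emph{edits of the query string can be avoided}.

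The heart of the argument, and the step I expect to be the main obstacle, is this elimination of query-string edits. First I would reprove, now in the presence of query edits, that any matching using fewer than $n$ edits must start at $v_0$: since $v_0$ is the unique vertex labelled $x_0$ and every long path is level-cyclic ($v_0 \to v_i \to v_{i,j} \to v_0$), any other starting level would force the single vertex $v_0$ to realise all $n$ pairwise distinct symbols $y_1,\dots,y_n$ of $s$, already costing at least $n$ edits. Once the path starts at $v_0$, the positions of $s$ carrying $x_0$ match $v_0$ for free; each position carrying $z$ maps to some $v_i$, and a query edit there can be traded for the graph edit $x_i \to z$ without increasing the count — and is in fact never better, since one graph edit is shared by every occurrence of $v_i$ whereas a query edit pays per position; finally each position carrying $y_q$ maps to a vertex $v_{i,j}$, and, exactly as in the rerouting (\emph{modify $p$}) step of Lemma~\ref{lem:sub1}, any edit there can be removed by rerouting to the successor of $v_i$ whose label is already $y_q$, which is feasible precisely when the sets indexing the edited $v_i$'s cover $u_q$. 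The result is a restricted approximate matching whose only edits are the substitutions $x_i \to z$ on level-one vertices and whose cost is at most the original; feeding it to the second direction of Lemma~\ref{lem:sub1} returns a cover of size at most the number of edits, so the \SeqAlign{} optimum is at least $h^*$ and the two optima coincide.

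With this equality in hand I would derive both conclusions. For W[2]-hardness, the map $(U,C)\mapsto(G,s)$ is a parameterized reduction sending cover size to edit number, so the W[2]-hardness of \SetCover{}~\cite{DBLP:journals/tcs/PazM81} transfers to \SeqAlign{} parameterized by the number of edit operations, and by Lemma~\ref{lem:pathwidth} it already holds for graphs at distance one from a DAG (mirroring Theorem~\ref{teo:W1Hard}). For inapproximability, since the \SeqAlign{} optimum on $(G,s)$ equals the minimum cover size and a cover of that size is read off the returned path, any polynomial-time $\alpha$-approximation for \SeqAlign{} would give an $\alpha$-approximation for \SetCover{}; combining this with the classical $\Omega(\log n)$ inapproximability of \SetCover{} and the estimates $|s| = 3n$ and $|V| = O(mn)$ — so that $\log|s|$ and $\log|V|$ are both $\Theta(\log n)$ on the hard instances — shows that \SeqAlign{} cannot be approximated within $\Omega(\log|V|)$ or $\Omega(\log|s|)$ unless $P = NP$, even on graphs at distance one from a DAG.
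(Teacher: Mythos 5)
Your proposal follows essentially the same route as the paper: reuse the \SetCover{} gadget, argue that any matching with fewer than $n$ edits must start at $v_0$, trade query-string edits for graph-label edits on the $v_i$'s and reroute the positions carrying $y_q$ exactly as in Lemma~\ref{lem:sub1}, then read off a cover and invoke the W[2]-hardness and $\Omega(\log n)$-inapproximability of \SetCover{}. One small slip worth fixing: when the path starts at a vertex $v_{i,j}$ the positions carrying $y_1,\dots,y_n$ are mapped to $n$ distinct level-one vertices $v_l$, not all to $v_0$, but each such position still forces its own edit, so your $\geq n$ lower bound (and hence the whole argument) survives.
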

\begin{proof}
We consider the same construction given in the reduction 
from \SetCover{} to \SeqAlG{}.
Notice that, if there exists a cover $C'$ of $U$ of 
cardinality $h < n$, we can compute
in polynomial time a solution of \SeqAlign{} 
with at most $h$ edit operations as in Lemma \ref{lem:sub1}.
Now, consider a solution of
\SeqAlign{} on instance $G=(V,E,\sigma)$
that requires at most $h < n$ edit operations, we show that
we can compute in polynomial time
a cover $C'$ of $U$ of cardinality $h$.
We prove that, given 
a solution of \SeqAlign{} on instance $G=(V,E,\sigma)$,
we can restrict ourselves 
to edit operations on graph labels.

Let $p$ be a path in $G$ that approximately matches $s$.
Notice that $p$ must contain vertex $v_0$ (otherwise
$p$ cannot be of length $|s|$, since $|s|=3n$). 

Now, assume that the first vertex of $p$ is $v_i$, 
with $1 \leq i \leq m$. Then, by construction, 
each position of $s$ containing $y_i$, with $1 \leq i \leq n$,
is mapped in vertex $v_0$, hence
at least $n-1$ edit operations on these positions
of $s$ 
are required for the approximate matching, plus
an edit operation either on $s$ or on the label $x_0$ of $v_0$.
But then $n > h$ edit operations are required 
by an approximate matching of $s$ with $p$. 

Assume that the first vertex of $p$ is $v_{i,j}$,
with $1 \leq i \leq m$ and $1 \leq l \leq |S_j|$.
Then the positions with symbols $y_q$ in $s$, 
with $1 \leq q \leq n$, are mapped to vertices
$v_l$, with $1 \leq l \leq m$, thus requiring
at least $n > k$ edit operations to match $s$ with $p$.

As a consequence, it follows that the first positions
of $p$ must be $v_0$ and 
hence that the positions of $s$ containing
symbol $x_0$ always match $v_0$ (no edit
operation is required). 
The positions of $s$ containing
symbol $z$ are mapped to vertices $v_i$, with $1 \leq i \leq m$. Assume that an edit operation
is done in one of those positions of $s$ and that
its symbol is edited to $x_i$. 
Since $x_i$ has only one occurrence in $s$,
each position of $s$ mapped to $v_i$ requires an edit operation.
Hence by editing the label of $v_i$ to $z$, we obtain
an approximate matching of $s$ that does not increase
the number of edit operations and such that no edit
operation is applied in the positions of $s$ mapped to $v_i$.

Finally, assume that a position $t$, with $1 \leq t \leq |s|$,
of $s$ containing symbol $y_i$, 
with $1 \leq i \leq n$, is mapped to a vertex $v_{j,l}$, 
with $1 \leq j \leq m$ and $1 \leq l \leq |S_j|$,
and that either 
$\sigma(v_{j,l})$ or the symbol $s[t]$ is edited and let
$v_a$, $1 \leq a \leq n$, be the vertex that precedes $v_{j,l}$ in $p$. 
Consider a set $S_b$, with $1 \leq b \leq m$,
that contains element $u_i$, with $1 \leq i \leq n$. 
We can compute an approximate matching of $s$ and $G$ 
that does not increase the number of edit operations, 
by replacing $v_a$ with $v_b$
and editing the label of $v_b$ to $z$ 
(if it is no already edited),
so that position $t$
of $s$ is mapped in $v_{b,q}$, with $\sigma(v_{b,q}) = y_i$,
and hence no edit operation is required  neither for symbol $\sigma(v_{b,q})$
nor for position $t$ of $s$.

It follows that, starting from $p$, 
we can compute in polynomial time a path $p'$ in $G$
that is an approximate matching of $s$, 
that does not increase the number of edit operations with 
respect to $p$ 
and such that the edit operations are applied only
to labels of the graph, and, in particular, 
only to vertices $v_i$, $1 \leq i \leq m$, in $p$. 
Then we can compute a solution of
\SetCover{} as in the proof of Lemma \ref{teo:W1Hard}:

\[
C' = \{ S_i: \text{the label of $v_i$ is edited to } z , 1 \leq i \leq m \}.
\]

As in the proof of Lemma \ref{lem:sub1}, 
at most $h$ labels of vertices of $p$ are edited (to $z$),
hence at most $h$ sets belong to $C'$.
Furthermore, since each vertex with label $y_i$,
$1 \leq i \leq n$, is connected to a vertex
$v_j$ in $p$, $1 \leq j \leq m$, by construction
it follows that 
each element of $U$ belongs to some set in $C'$. 

The parameterized reduction  from \SetCover{}
to \SeqAlign{} we have described and 
the W[2]-hardness of \SetCover{}  \cite{DBLP:journals/tcs/PazM81},
imply that \SeqAlign{} is W[2]-hard
when parameterized by the number of edit operations.

Notice that the reduction is also an
approximation preserving reduction \cite{DBLP:books/daglib/0030297},
since, given a solution of \SetCover{} consisting of $h$ sets, 
we can compute in
polynomial time a solution of \SeqAlign{} that requires $h$ edit operations and, given a solution of \SeqAlign{} that requires $h$ edit operations, we can compute in polynomial time a solution 
of \SetCover{} consisting of $h$ sets.
Since 
\SetCover{} is not approximable within factor 
$\Omega(\log (|U|))$, unless P=NP,
even when $|U|=n$ and $|C|=m$ 
are polynomially related \cite{DBLP:journals/eccc/Nelson07},
and $|V| =  m + mn + 1$, while $|s| = 3 n$,
it follows that \SeqAlign{} cannot be approximated
within factor $\Omega(\log (|V|))$ and $\Omega(\log (|s|))$, unless P=NP.
\qed
\end{proof}

%

\section{\SeqComp{} Parameterized by $|s|$}
\label{sec:paracompl1}

We present a fixed-parameter algorithm
for \SeqComp{} when parameterized by $|s|$.
We consider the case where each vertex 
of $G$ is labeled with exactly one symbol
(notice that in this case, 
by Theorem \ref{teo_hard1}, \SeqComp{} is NP-complete
and, by Corollary \ref{cor:kernel},  
\SeqComp{} parameterized by $|s|$ does not admit a polynomial
kernel unless $NP \subseteq coNP/Poly$).



We start by proving an easy property
of an instance of \SeqComp{}.

\begin{lemma}
\label{lem:tract1}
$|\Sigma| \leq |s|$.
\end{lemma}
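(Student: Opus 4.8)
The plan is to read the statement as a \emph{without loss of generality} reduction: every instance of \SeqComp{} can be transformed in polynomial time into an equivalent instance whose alphabet $\Sigma$ satisfies $|\Sigma|\le|s|$. First I would let $\Sigma_s=\{s[i]:1\le i\le|s|\}$ be the set of symbols that actually occur in the query string, and observe immediately that $|\Sigma_s|\le|s|$, since a string of length $|s|$ contains at most $|s|$ distinct symbols.

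The core observation is that, because \SeqComp{} allows an unbounded number of edit operations on the graph labels and none on $s$, the particular symbols labeling the vertices of $G$ are irrelevant to compatibility when labels have length one. Indeed, a path $p=v_{i_1}\dots v_{i_{|s|}}$ is compatible with $s$ exactly when each distinct vertex occurring in $p$ can be assigned a single edited symbol consistent with all positions of $s$ at which it appears; that is, whenever $v_{i_a}=v_{i_b}$ we must have $s[a]=s[b]$. This condition refers only to $s$ and to the repetition pattern of $p$, not to the original values of $\sigma$, since each length-one label may be freely overwritten at no cost. Consequently, any symbol in $\Sigma\setminus\Sigma_s$ is never required as the target of an edit, and every vertex whose label lies outside $\Sigma_s$ may be relabeled with an arbitrary symbol of $\Sigma_s$ without changing whether a compatible path exists.

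I would therefore replace $\Sigma$ by $\Sigma_s$ and adjust $\sigma$ accordingly, obtaining an equivalent instance with $|\Sigma|=|\Sigma_s|\le|s|$, which is exactly the claim. The only point requiring care---and the one I would write out in full---is the verification that this relabeling preserves the existence of a compatible path in both directions; this reduces to the remark above that, under unlimited free label edits, compatibility is governed solely by the consistency constraint $v_{i_a}=v_{i_b}\Rightarrow s[a]=s[b]$, which is unaffected by the relabeling. I expect no genuine obstacle here, only the need to state this consistency characterization cleanly, ideally reusing the simple-path argument already employed in the proof of Lemma~\ref{lem:GeneralComp}.
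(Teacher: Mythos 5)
Your proposal is correct and takes essentially the same approach as the paper: the paper also reads the lemma as a without-loss-of-generality claim, noting that symbols occurring in $s$ number at most $|s|$ and that any vertex label not occurring in $s$ would be edited anyway in a restricted approximate matching, so all such labels can be substituted by symbols of $\Sigma$ restricted to those in $s$. Your explicit consistency characterization of compatibility is a slightly more detailed justification of the same relabeling step.
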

\begin{proof}
The lemma clearly holds for those symbols having an occurrence
in $s$. For those symbols that label 
vertices of $G$ but that do not have occurrences in $s$, since they cannot exactly match any position of $s$ and
they will be edited in a restricted approximate matching of $s$,
it follows that we can substitute all their occurrences
with any symbol in $\Sigma$.
\qed
\end{proof}

The fixed-parameter algorithm is based on the
color-coding technique \cite{Alon:Yuster:Zwick:1995}
and on dynamic programming.
Consider a path $p$ in $G$ that is compatible with $s$
and the set $V(p)$ of vertices that induces $p$, where $|V(p)| = k$.
It holds $k \leq |s|$, since
each position of $s$ is mapped in at least one vertex of $p$.

We consider a coloring of $V$
with a set of colors $\{c_1 , \dots, c_k\}$, 
where, given a vertex $v \in V$, 
we denote by $c(v)$ the color
assigned to $v$.
Based on color-coding (see Definition \ref{def:perfect-hash}), we assume that
the coloring is \emph{colorful}, that is
each vertex of $V(p)$ is assigned a
distinct color in $\{c_1 , \dots, c_k\}$. 

Now, each color $c_i$, 
with $1 \leq i \leq k$, is associated by
a function 
$r $: $\{ c_1, \dots, c_k\} \rightarrow$ $\Sigma$,
with a symbol in $ \Sigma$, 
that represents the fact that
the vertices of $p$ that are colored by $c_i$, 
with $1 \leq i \leq k$, must match a position
of $s$ containing symbol $r(c_i)$. 
In this case we say that $p$ \emph{satisfies} $r$.
The algorithm iterates over the possible colorings of
graph $G$ based on a family of perfect hash functions and over the possible functions $r$.

Now, given a coloring of $G$ and a function $r$,
define a function 
$M_{r}[i,v]$, with $1 \leq i \leq |s|$ and
$v \in V$, as follows. 
$M_{r}[i,v]$ is equal to $1$ if 
there exists a path $p$
of $G$ that is compatible with $s[1,i]$ and
such that 
(1) position $i$ of $s$ is mapped in $v$,
and (2) $p$ satisfies $r$; else $M_{r}[i,v]=0$.
Notice that, since $s[1,i]$ is mapped in $v$,
it follows that $v$ is the last vertex of $p$.
Next, we describe the recurrence to compute 
$M_{r}[i,v]$. For $i \geq 2$,
if $r(c(v)) \neq s[i]$, then $M_{r}[i,v] = 0$;
if $r(c(v)) = s[i]$, then:
\[
M_{r}[i,v] = \bigvee_{u \in V:(u,v) \in E} 
M_{r}[i-1,u]
\]


In the base case, it holds $M_{r}[1,v] = 1$
if and only if $r(c(v))= s[1]$, else $M_{r}[1,v] = 0$.
Next, we prove the correctness of the recurrence.

\begin{lemma}
\label{lemma:correctness:algorithm1}
$M_{r}[i,v]$ is equal to $1$ if and only if
there exists a path $p$
of $G$ that is compatible with $s[1,i]$ and
such that 
(1) position $i$ of $s$ is mapped in $v$,
and (2) $p$ satisfies $r$.
\end{lemma}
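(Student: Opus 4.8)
The plan is to prove Lemma \ref{lemma:correctness:algorithm1} by induction on $i$, establishing the equivalence between the dynamic programming table value $M_r[i,v]=1$ and the existence of a compatible path ending at $v$ that maps position $i$ of $s$ to $v$ and satisfies $r$. The base case $i=1$ is immediate from the definition: $M_r[1,v]=1$ exactly when $r(c(v))=s[1]$, which is precisely the condition for the single-vertex path $p=v$ to be compatible with $s[1,1]$ (the label of $v$ can be edited to $s[1]$), to map position $1$ to $v$, and to satisfy $r$ on the color $c(v)$. Since $G$ has labels of length one, each vertex matches exactly one position, so a single vertex suffices for a length-one prefix.

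For the inductive step, assume the claim holds for $i-1$ and consider $i \geq 2$. I would argue both directions. \emph{(Forward)} Suppose $M_r[i,v]=1$. By the recurrence this forces $r(c(v))=s[i]$ and requires some in-neighbor $u$ with $(u,v)\in E$ and $M_r[i-1,u]=1$. By the induction hypothesis there is a path $p'$ compatible with $s[1,i-1]$ mapping position $i-1$ to $u$ and satisfying $r$; extending $p'$ by the edge $(u,v)$ yields a path $p$ whose label at $v$ is edited to $s[i]$ (consistent with $r(c(v))=s[i]$), so $p$ is compatible with $s[1,i]$, maps position $i$ to $v$, and satisfies $r$. \emph{(Backward)} Suppose such a path $p$ exists. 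Since labels have length one and position $i$ is mapped to the last vertex $v$ of $p$, the prefix $p'$ obtained by deleting $v$ maps position $i-1$ to its predecessor $u$ (where $(u,v)\in E$), is compatible with $s[1,i-1]$, and still satisfies $r$ on its colors; hence $M_r[i-1,u]=1$ by induction. Moreover, compatibility of $p$ with $s[1,i]$ together with satisfaction of $r$ forces $r(c(v))=s[i]$, so the recurrence sets $M_r[i,v]=\bigvee_{(u,v)\in E} M_r[i-1,u]=1$.

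The step I expect to require the most care is the \textbf{backward direction's claim that the predecessor vertex is uniquely the one carrying position $i-1$}, and in particular that satisfaction of $r$ is preserved under taking the prefix $p'$. Here I would emphasize that because graph labels have length one, there is a clean one-to-one correspondence between the $i$ vertices of $p$ and the $i$ positions of $s[1,i]$, so removing the last vertex removes exactly position $i$ and leaves a valid mapping for $s[1,i-1]$; satisfaction of $r$ is inherited since $p'$ uses a subset of the colored vertices of $p$ and the constraint $r(c(w))=s[\text{pos}(w)]$ holds for every vertex $w$ of $p$, hence for every vertex of $p'$. The colorfulness assumption from color-coding guarantees the colors along $p$ are distinct, so the per-color symbol constraints imposed by $r$ are mutually consistent and restrict correctly to the prefix. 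Once these structural observations are in place, the equivalence for index $i$ follows directly from the recurrence, completing the induction.
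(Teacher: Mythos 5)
Your proof is correct and follows essentially the same route as the paper's: induction on $i$, with the base case read off from the definition and the inductive step splitting into the two directions via the recurrence, extending or truncating the path at its last vertex. The extra care you take in the backward direction (the one-to-one correspondence between vertices and positions under length-one labels, and the inheritance of $r$-satisfaction by the prefix) only makes explicit what the paper leaves implicit.
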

\begin{proof}
We prove the lemma by induction on $i \geq 1$.
First, we consider the base case, when $i=1$. 
By definition it holds $M_{r}[1,v] = 1$ 
if and only if $r(c(v))= s[1]$, with $v \in V$, hence 
if and only if there exists a path consisting only of vertex 
$v$ that 
is a compatible matching of $s[1]$ and such that
$p$ satisfies $r$.

Assume that the lemma holds for $i-1 \geq 1$, 
we show that it holds for $i$.
Consider the case that $M_{r}[i,v] = 1$, it 
follows that there exists a vertex $u$, such that
$M_{r}[i-1,u] = 1$, 
$(u,v) \in E$ and $r(c(v))=s[i]$.
By induction hypothesis, there exists a path $p$ in $G$ 
that is a compatible matching of $s$
such that:
(1) the last vertex of $p$ is $u$ 
(thus position $i-1$ of $s$ is mapped in $u$),
and (2) $p$ satisfies $r$.
Then, consider the path $p'$ obtained by connecting $u$ to $v$.
Since $(u,v) \in E$ and $r(c(v))=s[i]$, it follows that $p'$ 
is a compatible matching of $s$ such that
(1) 
position $i$ of $s$ is mapped in $v$,
and (2) $p$ satisfies $r$.

Consider a path $p$ in $G$ that is a compatible matching 
of $s[1,i]$ and such that
(1) 
position $i$ of $s$ is mapped in $v$,
and (2) $p$ satisfies $r$.
Since $i-1 \geq 1$, it 
follows that there exists a vertex $u$ in $p$, with $(u,v) \in E$. Consider the path $p'$ obtained from $p$ by removing
$v$. 
By induction hypothesis, path $p'$ in $G$ is an approximate matching 
of $s[1,i-1]$ such that
(1) position $i-1$ of $s$ is mapped in $u$ (the last vertex of $p'$),
and (2) $p$ satisfies $r$.
By induction hypothesis, $M_{r}[i-1,u] = 1$.
By the definition of the recurrence, since $(u,v) \in E$ and  $r(c(v))=s[i]$,
it follows that $M_{r}[i,v] = 1$.
\qed
\end{proof}

In order to compute a colorful coloring of $G$, we 
consider a perfect family of hash functions for the set
of vertices of $G$.  


\begin{definition}
\label{def:perfect-hash}
Let $G=(V,E, \sigma)$ be a labeled graph  and let 
$C=\{c_1, \dots, c_k \}$ be a set of colors.
A family $F$ of hash functions from $V$ to $C$ is called \emph{perfect} if 
for each subset $V' \subseteq V$, with $|V'|=k$, 
there exists a function $f \in F$ such that
for each $x,y \in V'$, with $x \neq y$,
$f(x)=c_i$, $f(y)=c_j$, 
with $1 \leq i,j \leq k$ and $i \neq j$.
\end{definition}

It has been shown in~\cite{Alon:Yuster:Zwick:1995} that a perfect family $F$ of hash functions from $V$ to $C$, 
having size $ 2^{O(k)}O(\log |V| )$, can be computed
in time $2^{O(k)} O( |V| \log |V|)$. 
From Lemma \ref{lemma:correctness:algorithm1} and by
using a perfect family of hash functions to color the vertices
in $G$, we can prove the main result of this section.

\begin{theorem}
\label{teo:paracoml}
The \SeqComp{} problem can be decided in time
$2^{O(|s|)} O(|s|^{|s|+1} |V|^2 \log |V|)$.
\end{theorem}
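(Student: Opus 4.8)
The plan is to bound the total running time as a product of three factors: the number of colorings tried, the number of functions $r$ tried for each coloring, and the cost of filling in the dynamic programming table $M_r$ for a fixed coloring and fixed $r$. First I would recall that a path $p$ compatible with $s$ induces a vertex set $V(p)$ of size $k \leq |s|$, so it suffices to use $k = |s|$ colors (coloring with more colors than needed only helps, and we may take $k = |s|$ as the worst case). By the result of Alon, Yuster and Zwick cited above, a perfect family $F$ of hash functions from $V$ to a set of $k = |s|$ colors has size $2^{O(|s|)} O(\log |V|)$ and is computable in time $2^{O(|s|)} O(|V| \log |V|)$; iterating over $F$ therefore contributes the factor $2^{O(|s|)} O(\log |V|)$.

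Next I would count the functions $r \colon \{c_1,\dots,c_k\} \to \Sigma$. There are $|\Sigma|^k$ such functions, and by Lemma \ref{lem:tract1} we have $|\Sigma| \leq |s|$ and $k \leq |s|$, so this count is at most $|s|^{|s|}$. For each fixed coloring and each fixed $r$, I would bound the cost of computing the whole table $M_r[i,v]$. The table has $|s| \cdot |V|$ entries. By the recurrence, each entry $M_r[i,v]$ for $i \geq 2$ is computed as a disjunction over the in-neighbours $u$ of $v$, at a cost $O(|N^-(v)|)$; summing $|N^-(v)|$ over all $v \in V$ gives $O(|E|)$, so a single row $i$ costs $O(|E|)$ and the full table costs $O(|s| \cdot |E|) = O(|s| \cdot |V|^2)$, using the crude bound $|E| = O(|V|^2)$. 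Finally, \SeqComp{} is a yes-instance precisely when some table entry $M_r[|s|,v] = 1$ for some coloring, some $r$, and some $v$; scanning these entries is absorbed into the table-filling cost.

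Multiplying the three factors gives
\[
2^{O(|s|)} O(\log |V|) \cdot |s|^{|s|} \cdot O(|s| \cdot |V|^2)
= 2^{O(|s|)} O(|s|^{|s|+1} |V|^2 \log |V|),
\]
which is the claimed bound. Correctness of the decision, as opposed to the running time, follows from Lemma \ref{lemma:correctness:algorithm1}: if a compatible path $p$ exists then the colorful property of the chosen hash function guarantees that some coloring assigns distinct colors to $V(p)$, and taking $r(c(v)) = s[\text{position mapped to } v]$ for the relevant vertices makes $p$ satisfy $r$, so the corresponding entry $M_r[|s|, v]$ equals $1$; conversely any entry equal to $1$ certifies a compatible path. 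I expect the main subtlety to be the bookkeeping in the factor count, in particular justifying that $k = |s|$ colors suffice (which rests on $|V(p)| \leq |s|$) and that $|\Sigma| \leq |s|$ (Lemma \ref{lem:tract1}), rather than any deep difficulty; the arithmetic of combining the factors is routine once these two bounds are in place.
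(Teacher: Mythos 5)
Your proposal is correct and follows essentially the same route as the paper: iterate over a perfect hash family of size $2^{O(|s|)}O(\log |V|)$, over the at most $|\Sigma|^{k} \leq |s|^{|s|}$ functions $r$, fill the table $M_r$ in $O(|s|\,|V|^2)$ time, and invoke Lemma \ref{lemma:correctness:algorithm1} together with the colorfulness guarantee for correctness. The only (harmless) difference is that you bound the per-row DP cost by $O(|E|)$ before relaxing to $O(|V|^2)$, whereas the paper charges $O(|V|)$ per entry directly.
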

\begin{proof}
First, we consider the correctness of the algorithm.
From Lemma \ref{lemma:correctness:algorithm1},
given a colorful coloring of $G$ and a function $r$,
there exists a path $p$ in $G$ compatible with $s$ such that the last position of $s$ is mapped
in $v$, with $v \in V$, and $p$ satisfies function $r$ 
if and only if $M_r[|s|,v] = 1$, for some
$v \in V$.

Consider a family $F$ of perfect hash functions for $G$,
by Definition \ref{def:perfect-hash}, 
for each path $p$, induced by $V(p)$, with $|V(p)|=k$,
there exists a function $f$ in $F$ that is colorful for $V(p)$,
that is it assigns to each vertex
in $V(p)$ a distinct color. 
Consider a path $p$ that is compatible with $s$, 
and for a position $t$, $1 \leq t \leq |s|$, 
of $s$ mapped to a vertex $u$ in $G$,
define function $r$ so that it assigns symbol 
$s[t]$ to color $c(u)$. 
%
Then $r$ is satisfied by $p$.
The correctness of the algorithm holds since
it iterates over the possible functions
in a family of perfect hash functions and over 
the possible definitions of $r$.

It has been shown in~\cite{Alon:Yuster:Zwick:1995} that a perfect family $F$ of hash functions from $V$ to $C$, where $F$ has size 
$2^{O(k)} O(\log |V|)$, can be computed
in time $2^{O(k)} O( |V| \log |V|)$. 

The function $r$ can be computed in time $O(|s|^{|s|})$,
since $k \leq |s|$, each color can be associated
with at most $\Sigma$ symbols and, 
by Lemma \ref{lem:tract1}, $|\Sigma | \leq |s|$.
The dynamic programming
recurrence requires time $O(|s| |V|^2)$, since
$M_r[i,v]$ consists of $|s||V|$ entries and each
entry is computed in $O(|V|)$ time.
Hence the overall time complexity is 
$2^{O(k)} O(\log |V|) O(|s|^{|s|} (s|V|^2))$, and,
since $k \leq |s|$,
we can conclude that the time complexity is $2^{O(|s|)} O(|s|^{|s|+1} |V|^2 \log |V|)$.
\qed
\end{proof}


\section{Conclusion}
\label{sec:conclusion}

In this contribution we have presented results 
on the tractability of 
the approximate matching of a
query string to a labeled graph.
There are several open questions related
to variants of this problem.
It
will be interesting to further investigate
the approximability of \SeqAlign{}, 
since it can be trivially
approximated within factor $|s|$ in polynomial time, 
while it cannot be approximated within factor 
$\Omega (\log(|s|))$, unless P=NP.
Another interesting open question is to 
investigate the parameterized complexity of \SeqAlign{} when the edit operations
are not restricted to symbol substitutions, 
but include symbol insertions and 
deletions.

\bibliographystyle{splncs04}
\bibliography{biblio}

\newpage

\end{document}